\newtheorem{theorem}{Theorem}
\newtheorem{remark}{Remark}
\newtheorem{lemma}{Lemma}
\newcommand{\I}{\mathbbm{1}} 
\newcommand{\CN}{{\sf{CN}}} 
\newcommand{\TG}{\textbf{T}}
\DeclareMathOperator*{\argmin}{\arg\!\min}
\begin{document}
\title{Abrupt Transitions in Variational Quantum Circuit Training}

\author{Ernesto Campos}%
 \email{ernesto.campos@skoltech.ru}
 \affiliation{
 Skolkovo Institute of Science and Technology, 3 Nobel Street, Moscow 143026, Russian Federation 
 }
\author{Aly Nasrallah}%
 \email{aly.nasrallah@skoltech.ru}
 \affiliation{
 Skolkovo Institute of Science and Technology, 3 Nobel Street, Moscow 143026, Russian Federation 
 }
\author{Jacob Biamonte}
\email{j.biamonte@skoltech.ru}
 \homepage{http://quantum.skoltech.ru}
 \affiliation{
 Skolkovo Institute of Science and Technology, 3 Nobel Street, Moscow 143026, Russian Federation
 }

\begin{abstract}
Variational quantum algorithms dominate gate-based applications of modern quantum processors.  The so called, {\it layer-wise trainability conjecture} appears in various works throughout the variational quantum computing literature.  The conjecture asserts that a quantum circuit can be trained piece-wise, e.g.~that a few layers can be trained in sequence to minimize an objective function.  Here we prove this conjecture false. Counterexamples are found by considering objective functions that are exponentially close (in the number of qubits) to the identity matrix.  In the finite setting, we found abrupt transitions in the ability of quantum circuits to be trained to minimize these objective functions.  Specifically, we found that below a critical (target gate dependent) threshold, circuit training terminates close to the identity and remains near to the identity for subsequently added blocks trained piece-wise.  A critical layer depth will abruptly train arbitrarily close to the target, thereby minimizing the objective function.  These findings shed new light on the divide-and-conquer trainability of variational quantum circuits and apply to a wide collection of contemporary literature.
\end{abstract}
\maketitle

Recent times have seen dramatic advancements in the experimental realization of gate-based quantum processors.  Tantamount results include the demonstration of certain sampling tasks beyond the reach of classical supercomputers \cite{arute2019quantum} as well as mid-sized prototype demonstrations of quantum approximate optimisation \cite{farhi2014quantum}, quantum chemistry \cite{mccaskey2019quantum,o2016scalable,peruzzo2014variational,yung2014transistor,hempel2018quantum,colless2018computation} and various machine learning tasks \cite{smith2019simulating,Ma2020,biamonte2017quantum,broughton2020tensorflow,grant2018hierarchical,benedetti2019generative}.  To circumvent and avoid physical limitations experienced in today's hardware, these demonstrations were enabled by the so called, variational model of quantum computation \cite{moll2018quantum,peruzzo2014variational,morales2019universality}. The variational model works in tandem with a classical co-processor: iterative adjustments tune a parameterized quantum circuit to minimize an objective function.

The compilation problem---a traditional problem dating from the start of the field of quantum information processing \cite{Nielsen2011}---is to determine a sequence of simplistic gates that realizes a more complicated larger operation \cite{kitaevclassquan}.  Today's quantum processors realize fixed configurations of tunable gates.  For example, the so called hardware efficient ansatz (HEA)~\cite{kandala2017hardware} stems from the physical fact that qubits must interact by means of neighboring pairs.  The HEA---as well as other so called ansatz circuits---arise from physical motivations and form a fixed structure of gates that can be trained to minimize objective functions as well as other tasks.  The modern method of variational quantum computation has already influenced the problem of compilation \cite{Khatri2019quantumassisted} by using fixed ansatz circuits to realize desired operations \cite{higgott2019variational, biamonte2019universal, kirby2019contextuality, mitarai2018quantum, hastings2019classical}.

Subsequent studies of ansatz based compilation have considered noise resilience \cite{Sharma_2020} and adapted these methods to various forms of gradient-based and gradient-free optimization. State of the art results can be found in e.g.~the works \cite{Khatri2019quantumassisted,mcclean2018barren,mitarai2018quantum,schuld2019evaluating,xu2019variational}.  Closely related to variational approaches to compilation are the variational class of quantum algorithms, which have proven effective at certain machine learning tasks \cite{liang2020variational,mcclean2016theory} in which the variational circuit becomes a generative machine learning model \cite{benedetti2019parameterized,verdon2017quantum}. Both approaches typically adopt a layer-wise divide and conquer training strategy.

The so called, piece-wise trainability conjecture appears throughout the variational quantum computing literature, importantly see \cite{Skolik2020,carolan2020variational}.  The conjecture asserts that a circuit can be trained piece-wise, e.g.~that a few layers can be trained to form the first block, composed with a further block of layers and all the while, an objective function can be iteratively minimized. This conjecture turns out to not always be true.  

In the finite setting, we found abrupt transitions in the ability for quantum circuits to be trained.  Specifically, we consider the HEA and the checkerboard ansatze. Analytically we determine that for both ansatze the identity is an extrema, which prevents further training. As a consequence, the heuristic that aims to avoid barren plateaus by initializing the gates as identities  \cite{Grant2019, Skolik2020} does not apply. 

For a number of layers per stack below a critical threshold, the circuit trains close to the identity for the case of the HEA and stops training after a few layers (usually the first) for the checkerboard ansatz.  Abruptly, a critical layer depth per stack will train arbitrarily close to the target for both ansatze.  Indeed, the objective functions we consider are not some abstract invention, for example, a $k$-controlled unitary operator satisfies our criteria and provides a class of counterexamples to the piece-wise trainability conjecture.

Our results illustrate an abrupt trainability transition at critical target gate dependent depths. We confirm this analytically for single layers.

\section{Methods}\label{methods}

\textbf{Variational state- and gate-learning}.
Several approaches exist to transform unitary matrices into sequences of gates and operations that can be realized on quantum processors.  Modern quantum processors suffer from many limitations \cite{butko2019understanding}.  The variational class of quantum algorithms seeks to circumvent some systematic limitations such as variability in pulse timing and limited coherence times. 
Variational quantum algorithms are a class of hybrid quantum-classical algorithms that involve the minimization of a cost function dependant on the parameters of a tunable quantum gate sequence $U(\boldsymbol{\theta})$. 

We seek to increase the Hilbert-Schmidt overlap between $U(\boldsymbol{\theta})$ and a target unitary $\TG$, as follows:
\begin{enumerate}
    \item Prepare a parametrized gate sequence of the form 
    \begin{equation}
        U(\boldsymbol{\theta})=\prod_{i=1}^{p}V^{(i)}(\boldsymbol{\theta}_i),
    \end{equation}
    where $V^{(i)}(\boldsymbol{\theta}_i)$ are the ansatz layers. This circuit will be used to approximate a target unitary $\TG$. 
    
    \item Calculate the cost function to train the parametrized gate sequence, viz.,
    \begin{equation*}
     d\big\{U(\boldsymbol{\theta})\in \mathcal{L}(\left[\mathbb{C}^2\right]^{\otimes n}),~\TG\in \mathcal{L}(\left[\mathbb{C}^2\right]^{\otimes n}) \big\}~ \in~ [0, \infty).
    \end{equation*}
    
    \item The algorithm then updates the parameters based on the calculation of the cost function. 
    
    \item Iterate steps 2 and 3 until a certain threshold is reached.
\end{enumerate}

\textbf{Training by layer stacks}.
In an attempt to avoid barren plateaus and reduce optimization time, commonly $U(\boldsymbol{\theta})$ is partitioned and trained piece-wise \cite{Skolik2020,carolan2020variational}.
A layer stack $S_j$ is a block composed of $j$ layers of an ansatz. After an initial stack is optimized, its parameters are fixed and the next stack of $j$ layers is considered
\begin{equation}
\begin{aligned}
 S_j^{(m)}(\boldsymbol{\theta}_m)&=\prod_{i}^{j}V^{(i)}(\boldsymbol{\theta}_i),\\
    U(\boldsymbol{\theta})&=\prod_{m=1}^q S_{j}^{(m)}(\boldsymbol{\theta}_m), 
\end{aligned}
\end{equation}
where $m$ corresponds to each stack being added, and the total depth of the circuit is $p=q\cdot j$.

\textbf{Hardware efficient ansatz}. A single layer of the hardware efficient ansatz consists of three single-qubit rotations on each wire followed by a control rotation in a daisy chain, as depicted in figure \ref{HEAfig}.
 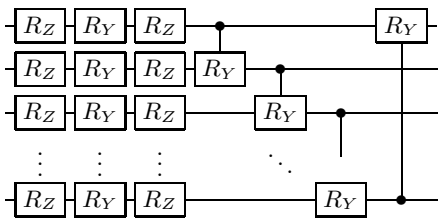
\begin{figure}[ht]
\centering
\[\Qcircuit @C=.4em @R= .4em @!R{
            & \gate{R_Z} & \gate{R_Y}  & \gate{R_Z} & \ctrl{1}   & \qw        & \qw & \gate{R_Y}    & \qw  \\
            & \gate{R_Z} & \gate{R_Y}  & \gate{R_Z} & \gate{R_Y}   & \ctrl{1}        & \qw & \qw    & \qw  \\
             & \gate{R_Z} & \gate{R_Y}  & \gate{R_Z} & \qw   & \gate{R_Y}   & \ctrl{1}  & \qw    & \qw  \\
             &\vdots &\vdots  & \vdots & &\ddots & & &\\
             & \gate{R_Z} & \gate{R_Y}  & \gate{R_Z} & \qw   & \qw      & \gate{R_Y}& \ctrl{-4}   & \qw \\
        }\]
        \caption{One layer of the hardware efficient ansatz.}
        \label{HEAfig}
\end{figure}

\textbf{Checkerboard ansatz.} The checkerboard ansatz (a.k.a. brick layer ansatz) \cite{uvarov2020} appears throughout the literature and is depicted in figure \ref{fig:checkerboardl}.

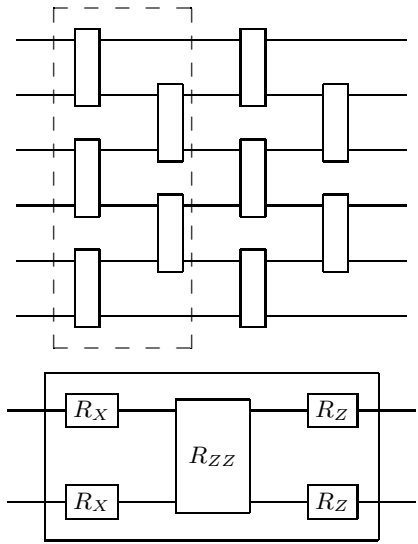
\begin{figure}[ht]
\[   \Qcircuit @C=2.35em @R= 1.35em {
 & \multigate{1}{} & \qw &\multigate{1}{} &\qw&\qw\\
 & \ghost{} & \multigate{1}{}& \ghost{}& \multigate{1}{}&\qw\\
 & \multigate{1}{} &\ghost{} &\multigate{1}{} &\ghost{}&\qw\\
 & \ghost{} & \multigate{1}{} &\ghost{}&\multigate{1}{}&\qw\\
 & \multigate{1}{} & \ghost{}& \multigate{1}{}&\ghost{}&\qw\\
 & \ghost{} &\qw &\ghost{}&\qw&\qw
 \gategroup{1}{2}{6}{3}{1.7em}{--}
}\]
\[   \Qcircuit @C=2.35em @R= 2.35em {
              & \gate{R_X} & \multigate{1}{R_{ZZ}}   & \gate{R_Z}   & \qw  \\
            & \gate{R_X} & \ghost{R_{ZZ}} &    \gate{R_Z} & \qw
             \gategroup{1}{4}{2}{2}{1.7em}{-}
}\]
    \caption{Top: 2 layers of the checkerboard ansatz. The dashed region encloses one ansatz layer. Bottom: the entangling gate used herein, where $R_{ZZ}(\omega)=e^{-\imath\omega Z\otimes Z}$.}
    \label{fig:checkerboardl}
\end{figure}

Without loss of generality, we chose to approximate $k$-Toffoli gates and vary $k$. 
The $k$-Toffoli gate can be written as 
\begin{equation}
X^{a_1...a_k}_b=\mathds{1}_{2^n-2}\oplus X,
\end{equation}
where the $\{a_i\}$ denotes control qubits and $b$ indexes the target qubit.
The cost function is based on the Hilbert-Schmidt matrix norm 
\begin{equation}
    d(X,Y)=1-\text{abs}(||X, Y||_{HS})\geq 0.
    \label{distance}
\end{equation}
This cost function is a metric as it satisfies positivity, symmetry, and the triangular inequality. 
If the training using $j$ layers per stack is not successful after $q$ stacks, $j$ is increased until the training is successful.  

The distance between a $k$-controlled gate and the identity is:
\begin{equation}
\begin{aligned}
d(A^{a_1...a_k}_b, \mathds{1}_{2^n})&=1-\frac{\Tr(A^{a_1...a_k\dagger}_b \mathds{1}_{2^n})}{2^n}\\
&=1-\frac{\Tr(\mathds{1}_{2^n-2}\oplus A)}{2^n}\\
&\leq2^{1-n}
=2^{-k}.
\end{aligned}
\label{deq}
\end{equation}
This will be used to infer from the cost when the optimization terminates near the identity. 

\section{Results}\label{results}
We make the following empirical observation: Given a metric function $d$, there exists a target unitary $\TG\neq e^{\imath\beta}\I$, $\beta \in [0, 2\pi)$, for which a stack $S_j(\boldsymbol{\theta})$, with depth $j<c$, added on top of a previously trained circuit $L\neq \TG$ will have a global minimum $S_j(\boldsymbol{\theta})=e^{\imath\alpha(\boldsymbol{\theta})}\I$, $\alpha(\boldsymbol{\theta}) \in [0, 2\pi)$. As a consequence, training by layer stacks with $<c$ layers is prevented, namely:
\begin{equation}
\begin{aligned}
   \forall ~\{S_j(\boldsymbol{\theta}),d\}& ~ \exists~ \{\TG\neq e^{\imath\beta}\I\}|\\
   &\argmin_{S_{j<c}(\boldsymbol{\theta})}~d(L\cdot S_{j<c}(\boldsymbol{\theta}),\TG)=e^{\imath\alpha(\boldsymbol{\theta})}\I.
 \end{aligned}
\end{equation} 
\newline Note that the cost function is gauge invariant under $U(1)$ placing an equivalence on all identity up to a phase.
\subsection{HEA results}
By training up to 5 qubits with up to 11 layers per stack of the hardware efficient ansatz we obtained the results shown in figure \ref{controledcomparison} and \ref{controlledcomparisonnorm}. We observe an abrupt trainability transition dependant on the target gate. We denote the critical number of layers per stack, $c$.

\begin{figure}[ht]
    \centering
    \includegraphics[width=0.5\textwidth]{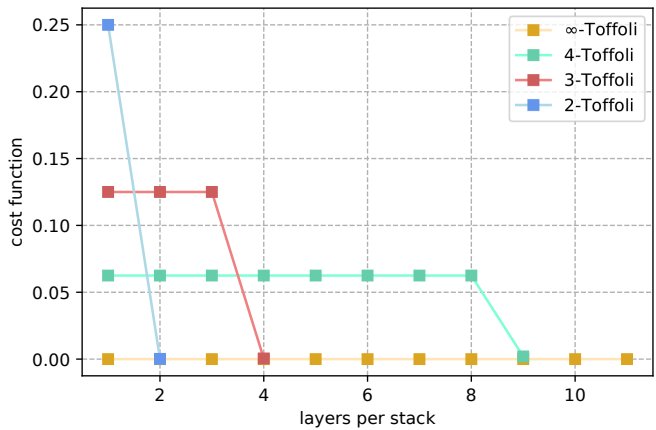}
    \caption{$k$-Toffoli gate trainability for multiple numbers of layers per stack of the HEA. E.g, the green plot (corresponding to the 4-Toffoli) illustrates untrainability for up to 8 layers per stack.}
    \label{controledcomparison}
\end{figure}
\begin{figure}[ht]
    \centering
    \includegraphics[width=0.5\textwidth]{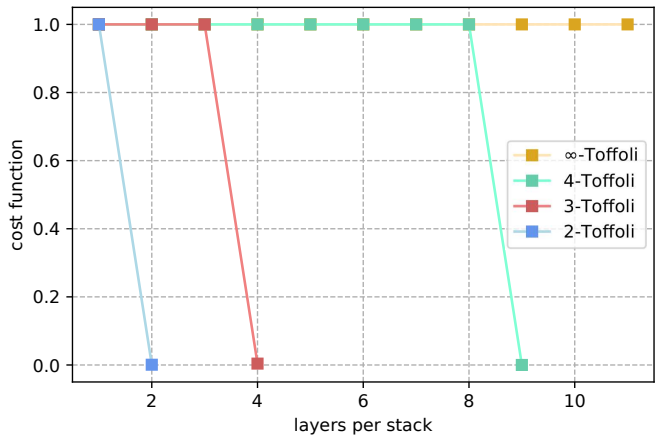}
    \caption{Trainability transition with the normalized value of the cost function for stacks of the HEA. We observe a waterfall effect on the critical number of layers per stack that increases with $k=n-1$.}
    \label{controlledcomparisonnorm}
\end{figure}

Bellow $c$ the cost is always $d(k\text{-Toffoli},\I)=2^{-k}$. For the 2-Toffoli gate, the global minimum of a single layer is the identity as depicted in figure \ref{1layerdists}. 

Furthermore, for a single layer of the HEA, we prove that $d( V_{\text{hea}}(\boldsymbol{\theta}),k\text{-Toffoli})$ takes extrema for the roots of
\begin{equation}
   V_{\text{hea}}(\boldsymbol{\theta}_\I)-e^{\imath\alpha(\boldsymbol{\theta}_\I)}\I=0.
   \label{identcondihea}
\end{equation}
We derived the complete set of hyperparameters for which \eqref{identcondihea} is true, and computed the first and second derivatives of $d$  (derivation in appendix \ref{min-cost}).

The parameters of the $R_Z(\theta_i)$ pair in each wire are not independent, and the gradients of $d$ with respect to them are 0 for the first and second derivatives, implying not just an extreme point but a valley. 
The gradients of $d$ with respect to the $CR_Y(\theta_i)$ and $R_Y(\theta_i)$ gate parameters are 0 for the first derivative and take a positive value for the second derivative. 

Since $V_{\text{hea}}(\boldsymbol{\theta}_\I)=e^{\imath\alpha(\boldsymbol{\theta}_\I)}\I$ are extrema, initializing the gates as identities (a heuristic suggested and used in \cite{Skolik2020, Grant2019}) would prevent any training.

\begin{figure}
    \includegraphics[width=.5\textwidth]{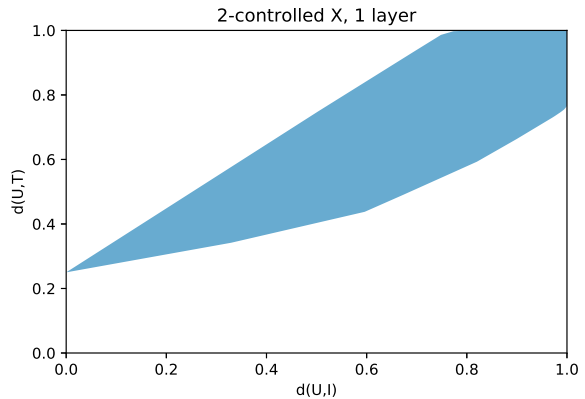}
    \caption{Contour of 10 million points representing unitaries made by a single layer of the HEA with random parameters, and their distance to the identity and the Toffoli gate illustrating $e^{\imath\alpha(\boldsymbol{\theta}_\I)} \I$ is the global minimum.}
    \label{1layerdists}
\end{figure}

When training stacks with more than $c$ layers, it is still common to find the optimization returning an identity. Avoiding the identity extrema becomes increasingly difficult as the number of qubits increases. This stems from the distance between the identity and the $k$-Toffoli decreasing exponentially with $k$ as $d=2^{-k}$.

\subsection{Checkerboard ansatz results}
Abrupt trainability transitions appear in checkerboard in a slightly different fashion compared to the HEA. 
Figure \ref{fig:2toffcheck} illustrates the training process for a 2-Toffoli. When training stacks of 1 and 2 layers, training stops after a single stack, as the following stacks approximate an identity. This implies the identity is not the global minimum for the first layer stack (as it is the case for HEA), yet it becomes the global minimum following a stack that trained.
It is only after $c\geq3$ we observe training with each new stack. 

\begin{figure}[ht]
    \centering
    \includegraphics[width=0.5\textwidth]{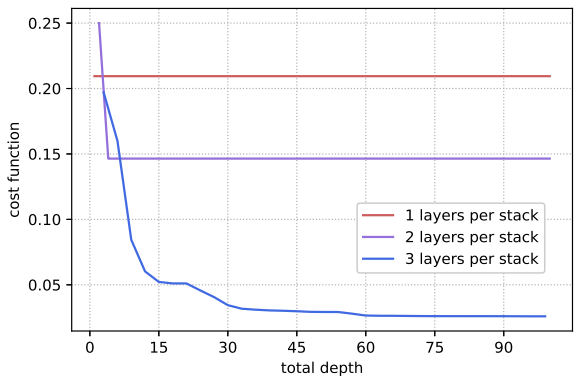}
    \caption{Trainability convergence for 1,2 and 3 layer stacks of the checkerboard ansatz approximating a 2-Toffoli gate. 1 and 2 layer stacks stop training after one stack trains, but 3 improves the approximation with every added stack.}
    \label{fig:2toffcheck}
\end{figure}
Similar to the HEA, for a single layer of the checkerboard ansatz, $d(V_\text{ch}(\boldsymbol{\omega}),k\text{-Toffoli})$ takes extrema for the roots of 
\begin{equation}
     V_\text{ch}(\boldsymbol{\omega}_\I)-e^{\imath\alpha(\boldsymbol{\omega}_\I)}\I=0.
\end{equation}
Computing the gradients of $d$  with respect to any parameter, the first derivative of  $d$ is 0, and the second derivative is a positive number (derivation in appendix \ref{min-cost}).

\section{Conclusion}\label{conclusion}

We have proven the layer-wise trainability conjecture false by showing the existence of {\it abrupt trainability transitions}. For the cases studied, training layer-wise with less than a critical number of layers per stack prevented learning, as each stack would train close to an identity. The effect was observed in the HEA and checkerboard ansatze when trying to approximate gates with $O(2^{-n})$ distance to the identity.
The untrainability seems to affect ansatze that can approximate an identity up to a global phase, but not a target unitary that has a high Hilbert-Schmidt overlap with the identity.

The findings depart from known results: various papers of the area studied similar scenarios but did not encounter abrupt transitions. For example, Skolik et al.~\cite{Skolik2020} studied training by layer stacks as a strategy to avoid barren plateaus. The researchers did not encounter abrupt transitions as their problem (MNIST classification) does not have the identity extrema in the cost function. The paper by Sharma et al.~\cite{Sharma_2020} studied quantum compilers noise resilience. The authors did not observe an abrupt transition when training a Toffoli gate as the ansatz they used (checkerboard composed of dressed CNOTs) can not simulate an identity for a single layer. 

Also, our results question the benefits of using an initialization heuristic where every gate begins as an identity (as used in \cite{Skolik2020}) or where randomly selected gates are initialized as identities \cite{Grant2019}, in an attempt to avoid barren plateaus. We prove such initialization would prevent training for the cases we studied.

\begin{acknowledgments}
The  authors  acknowledge  support  from  the  project, {\it Leading  Research  Center  on  Quantum  Computing} (Agreement No. 014/20). 
\end{acknowledgments}

\onecolumngrid
\bibliography{refs}

\begin{thebibliography}{10}

\bibitem{arute2019quantum}
Frank Arute, Kunal Arya, Ryan Babbush, Dave Bacon, Joseph~C Bardin, Rami
  Barends, Rupak Biswas, Sergio Boixo, Fernando~GSL Brandao, David~A Buell,
  et~al.
\newblock Quantum supremacy using a programmable superconducting processor.
\newblock {\em Nature}, 574(7779):505--510, 2019.

\bibitem{farhi2014quantum}
Edward Farhi, Jeffrey Goldstone, and Sam Gutmann.
\newblock A quantum approximate optimization algorithm.
\newblock {\em arXiv preprint arXiv:1411.4028}, 2014.

\bibitem{mccaskey2019quantum}
Alexander~J McCaskey, Zachary~P Parks, Jacek Jakowski, Shirley~V Moore, Titus~D
  Morris, Travis~S Humble, and Raphael~C Pooser.
\newblock Quantum chemistry as a benchmark for near-term quantum computers.
\newblock {\em npj Quantum Information}, 5(1):1--8, 2019.

\bibitem{o2016scalable}
Peter~JJ O’Malley, Ryan Babbush, Ian~D Kivlichan, Jonathan Romero, Jarrod~R
  McClean, Rami Barends, Julian Kelly, Pedram Roushan, Andrew Tranter, Nan
  Ding, et~al.
\newblock Scalable quantum simulation of molecular energies.
\newblock {\em Physical Review X}, 6(3):031007, 2016.

\bibitem{peruzzo2014variational}
Alberto Peruzzo, Jarrod McClean, Peter Shadbolt, Man-Hong Yung, Xiao-Qi Zhou,
  Peter~J Love, Al{\'a}n Aspuru-Guzik, and Jeremy~L O’brien.
\newblock A variational eigenvalue solver on a photonic quantum processor.
\newblock {\em Nature communications}, 5:4213, 2014.

\bibitem{yung2014transistor}
M-H Yung, Jorge Casanova, Antonio Mezzacapo, Jarrod Mcclean, Lucas Lamata, Alan
  Aspuru-Guzik, and Enrique Solano.
\newblock From transistor to trapped-ion computers for quantum chemistry.
\newblock {\em Scientific reports}, 4:3589, 2014.

\bibitem{hempel2018quantum}
Cornelius Hempel, Christine Maier, Jonathan Romero, Jarrod McClean, Thomas
  Monz, Heng Shen, Petar Jurcevic, Ben~P Lanyon, Peter Love, Ryan Babbush,
  et~al.
\newblock Quantum chemistry calculations on a trapped-ion quantum simulator.
\newblock {\em Physical Review X}, 8(3):031022, 2018.

\bibitem{colless2018computation}
James~I Colless, Vinay~V Ramasesh, Dar Dahlen, Machiel~S Blok,
  ME~Kimchi-Schwartz, JR~McClean, J~Carter, WA~De~Jong, and I~Siddiqi.
\newblock Computation of molecular spectra on a quantum processor with an
  error-resilient algorithm.
\newblock {\em Physical Review X}, 8(1):011021, 2018.

\bibitem{smith2019simulating}
Adam Smith, MS~Kim, Frank Pollmann, and Johannes Knolle.
\newblock Simulating quantum many-body dynamics on a current digital quantum
  computer.
\newblock {\em npj Quantum Information}, 5(1):1--13, 2019.

\bibitem{Ma2020}
He~Ma, Marco Govoni, and Giulia Galli.
\newblock {Quantum simulations of materials on near-term quantum computers}.
\newblock {\em npj Computational Materials 2020 6:1}, 6(1):1--8, feb 2020.

\bibitem{biamonte2017quantum}
Jacob Biamonte, Peter Wittek, Nicola Pancotti, Patrick Rebentrost, Nathan
  Wiebe, and Seth Lloyd.
\newblock Quantum machine learning.
\newblock {\em Nature}, 549(7671):195--202, 2017.

\bibitem{broughton2020tensorflow}
Michael Broughton, Guillaume Verdon, Trevor McCourt, Antonio~J Martinez,
  Jae~Hyeon Yoo, Sergei~V Isakov, Philip Massey, Murphy~Yuezhen Niu, Ramin
  Halavati, Evan Peters, et~al.
\newblock Tensorflow quantum: A software framework for quantum machine
  learning.
\newblock {\em arXiv preprint arXiv:2003.02989}, 2020.

\bibitem{grant2018hierarchical}
Edward Grant, Marcello Benedetti, Shuxiang Cao, Andrew Hallam, Joshua Lockhart,
  Vid Stojevic, Andrew~G Green, and Simone Severini.
\newblock Hierarchical quantum classifiers.
\newblock {\em npj Quantum Information}, 4(1):1--8, 2018.

\bibitem{benedetti2019generative}
Marcello Benedetti, Delfina Garcia-Pintos, Oscar Perdomo, Vicente
  Leyton-Ortega, Yunseong Nam, and Alejandro Perdomo-Ortiz.
\newblock A generative modeling approach for benchmarking and training shallow
  quantum circuits.
\newblock {\em npj Quantum Information}, 5(1):1--9, 2019.

\bibitem{moll2018quantum}
Nikolaj Moll, Panagiotis Barkoutsos, Lev~S Bishop, Jerry~M Chow, Andrew Cross,
  Daniel~J Egger, Stefan Filipp, Andreas Fuhrer, Jay~M Gambetta, Marc Ganzhorn,
  et~al.
\newblock Quantum optimization using variational algorithms on near-term
  quantum devices.
\newblock {\em Quantum Science and Technology}, 3(3):030503, 2018.

\bibitem{morales2019universality}
Mauro~ES Morales, Jacob Biamonte, and Zolt{\'a}n Zimbor{\'a}s.
\newblock On the universality of the quantum approximate optimization
  algorithm.
\newblock {\em arXiv preprint arXiv:1909.03123}, 2019.

\bibitem{Nielsen2011}
Michael~A. Nielsen and Isaac~L. Chuang.
\newblock {\em Quantum Computation and Quantum Information: 10th Anniversary
  Edition}.
\newblock Cambridge University Press, USA, 10th edition, 2011.

\bibitem{kitaevclassquan}
A.~Yu. Kitaev, A.~H. Shen, and M.~N. Vyalyi.
\newblock {\em Classical and Quantum Computation}.
\newblock American Mathematical Society, USA, 2002.

\bibitem{kandala2017hardware}
Abhinav Kandala, Antonio Mezzacapo, Kristan Temme, Maika Takita, Markus Brink,
  Jerry~M Chow, and Jay~M Gambetta.
\newblock Hardware-efficient variational quantum eigensolver for small
  molecules and quantum magnets.
\newblock {\em Nature}, 549(7671):242--246, 2017.

\bibitem{Khatri2019quantumassisted}
Sumeet Khatri, Ryan LaRose, Alexander Poremba, Lukasz Cincio, Andrew~T.
  Sornborger, and Patrick~J. Coles.
\newblock Quantum-assisted quantum compiling.
\newblock {\em {Quantum}}, 3:140, May 2019.

\bibitem{higgott2019variational}
Oscar Higgott, Daochen Wang, and Stephen Brierley.
\newblock Variational quantum computation of excited states.
\newblock {\em Quantum}, 3:156, 2019.

\bibitem{biamonte2019universal}
Jacob Biamonte.
\newblock Universal variational quantum computation.
\newblock {\em arXiv preprint arXiv:1903.04500}, 2019.

\bibitem{kirby2019contextuality}
William~M Kirby and Peter~J Love.
\newblock Contextuality test of the nonclassicality of variational quantum
  eigensolvers.
\newblock {\em Physical Review Letters}, 123(20):200501, 2019.

\bibitem{mitarai2018quantum}
Kosuke Mitarai, Makoto Negoro, Masahiro Kitagawa, and Keisuke Fujii.
\newblock Quantum circuit learning.
\newblock {\em Physical Review A}, 98(3):032309, 2018.

\bibitem{hastings2019classical}
Matthew~B Hastings.
\newblock Classical and quantum bounded depth approximation algorithms.
\newblock {\em arXiv preprint arXiv:1905.07047}, 2019.

\bibitem{Sharma_2020}
Kunal Sharma, Sumeet Khatri, M~Cerezo, and Patrick~J Coles.
\newblock Noise resilience of variational quantum compiling.
\newblock {\em New Journal of Physics}, 22(4):043006, apr 2020.

\bibitem{mcclean2018barren}
Jarrod~R McClean, Sergio Boixo, Vadim~N Smelyanskiy, Ryan Babbush, and Hartmut
  Neven.
\newblock Barren plateaus in quantum neural network training landscapes.
\newblock {\em Nature Communications}, 9(1):1--6, 2018.

\bibitem{schuld2019evaluating}
Maria Schuld, Ville Bergholm, Christian Gogolin, Josh Izaac, and Nathan
  Killoran.
\newblock Evaluating analytic gradients on quantum hardware.
\newblock {\em Physical Review A}, 99(3):032331, 2019.

\bibitem{xu2019variational}
Xiaosi Xu, Simon~C Benjamin, and Xiao Yuan.
\newblock Variational circuit compiler for quantum error correction.
\newblock {\em arXiv preprint arXiv:1911.05759}, 2019.

\bibitem{liang2020variational}
Jin-Min Liang, Shu-Qian Shen, Ming Li, and Lei Li.
\newblock Variational quantum algorithms for dimensionality reduction and
  classification.
\newblock {\em Physical Review A}, 101(3):032323, 2020.

\bibitem{mcclean2016theory}
Jarrod~R McClean, Jonathan Romero, Ryan Babbush, and Al{\'a}n Aspuru-Guzik.
\newblock The theory of variational hybrid quantum-classical algorithms.
\newblock {\em New Journal of Physics}, 18(2):023023, 2016.

\bibitem{benedetti2019parameterized}
Marcello Benedetti, Erika Lloyd, Stefan Sack, and Mattia Fiorentini.
\newblock Parameterized quantum circuits as machine learning models.
\newblock {\em Quantum Science and Technology}, 4(4):043001, 2019.

\bibitem{verdon2017quantum}
Guillaume Verdon, Michael Broughton, and Jacob Biamonte.
\newblock A quantum algorithm to train neural networks using low-depth
  circuits.
\newblock {\em arXiv preprint arXiv:1712.05304}, 2017.

\bibitem{Skolik2020}
Andrea Skolik, Jarrod McClean, Masoud Mohseni, Patrick van~der Smagt, and
  Martin Leib.
\newblock {Layerwise learning for quantum neural networks}.
\newblock {\em Bulletin of the American Physical Society}, Volume 65, Number 1,
  2020.

\bibitem{carolan2020variational}
Jacques Carolan, Masoud Mohseni, Jonathan~P Olson, Mihika Prabhu, Changchen
  Chen, Darius Bunandar, Murphy~Yuezhen Niu, Nicholas~C Harris, Franco~NC Wong,
  Michael Hochberg, et~al.
\newblock Variational quantum unsampling on a quantum photonic processor.
\newblock {\em Nature Physics}, 16(3):322--327, 2020.

\bibitem{Grant2019}
Edward Grant, Leonard Wossnig, Mateusz Ostaszewski, and Marcello Benedetti.
\newblock {An initialization strategy for addressing barren plateaus in
  parametrized quantum circuits}.
\newblock {\em Quantum}, 3:214, dec 2019.

\bibitem{butko2019understanding}
Anastasiia Butko, George Michelogiannakis, Samuel Williams, Costin Iancu, David
  Donofrio, John Shalf, Jonathan Carter, and Irfan Siddiqi.
\newblock Understanding quantum control processor capabilities and limitations
  through circuit characterization.
\newblock {\em arXiv preprint arXiv:1909.11719}, 2019.

\bibitem{uvarov2020}
A.~V. Uvarov, A.~S. Kardashin, and J.~D. Biamonte.
\newblock Machine learning phase transitions with a quantum processor.
\newblock {\em Phys. Rev. A}, 102:012415, Jul 2020.

\end{thebibliography}
\bibliographystyle{unsrt}

\clearpage
\onecolumngrid
\begin{center}
\textbf{\large Supplementary Materials}
\end{center}
\appendix
\section{Cost function derivatives}\label{gradient analysis}

\noindent The cost function can be expressed as follows
\begin{equation}
 d(X^{a_1...a_k}_b,U(\boldsymbol{\theta}))=1-\frac{1}{N}\sqrt{\Trace(X^{a_1...a_k\dagger}_b U(\boldsymbol{\theta}))^*\Trace(X^{a_1...a_k\dagger}_b  U(\boldsymbol{\theta})) }~.
 \label{distamceexplicit}
\end{equation}
The first derivative of the cost function is
\begin{equation}
\begin{aligned}
\partial_{j}d(X^{a_1...a_k}_b,U(\boldsymbol{\theta}))=& -{\frac{\Trace(X^{a_1...a_k\dagger}_b  ~\partial_{j}U(\boldsymbol{\theta}))^*\Trace(X^{a_1...a_k\dagger}_b  U(\boldsymbol{\theta}))}{ 2N\sqrt{\Trace(X^{a_1...a_k\dagger}_b U(\boldsymbol{\theta}))^*\Trace(X^{a_1...a_k\dagger}_b  U(\boldsymbol{\theta})) }}}\\
&-{\frac{\Trace(X^{a_1...a_k\dagger}_b  U(\boldsymbol{\theta}))^*\Trace(X^{a_1...a_k\dagger}_b  ~\partial_{j}U(\boldsymbol{\theta}))}{ 2N\sqrt{\Trace(X^{a_1...a_k\dagger}_b U(\boldsymbol{\theta}))^*\Trace(X^{a_1...a_k\dagger}_b  U(\boldsymbol{\theta}))}}}\\
=&-\frac{\Re\left\{\Trace(X^{a_1...a_k\dagger}_b  ~\partial_{j}U(\boldsymbol{\theta}))^*\Trace(X^{a_1...a_k\dagger}_b  U(\boldsymbol{\theta}))\right\}}{ {N\sqrt{\Trace(X^{a_1...a_k\dagger}_b U(\boldsymbol{\theta}))^*\Trace(X^{a_1...a_k\dagger}_b  U(\boldsymbol{\theta})) } }}\label{distance_derivative},
\end{aligned}
\end{equation}
where $\partial_j=\frac{\partial}{\partial \theta_j}$. The second derivative is
\begin{equation}\label{distance_second_derivative}
\begin{aligned}
\partial_{j}^2d(X^{a_1...a_k}_b,U(\boldsymbol{\theta}))=&\frac{-\Re\left\{\Trace(X^{a_1...a_k\dagger}_b~\partial_{j}^2U(\boldsymbol{\theta}))^*\Trace(X^{a_1...a_k\dagger}_b  U(\boldsymbol{\theta}))\right\}-|\Trace(X^{a_1...a_k\dagger}_b  ~\partial_{j}U(\boldsymbol{\theta}))|^2}{N\left[\Trace(X^{a_1...a_k\dagger}_b U(\boldsymbol{\theta}))^*\Trace(X^{a_1...a_k\dagger}_b  U(\boldsymbol{\theta})) \right]^{1/2}}\\
&+ \frac{|\Trace(X^{a_1...a_k\dagger}_b \partial_{j} U(\boldsymbol{\theta}))\Trace(X^{a_1...a_k\dagger}_b  U(\boldsymbol{\theta}))|^2}{N\left[\Trace(X^{a_1...a_k\dagger}_b U(\boldsymbol{\theta}))^*\Trace(X^{a_1...a_k\dagger}_b U(\boldsymbol{\theta}))\right]^{3/2}}.
\end{aligned}
\end{equation}

\noindent The partial derivative of $U(\boldsymbol{\theta})$ with respect to a single qubit parameter $\boldsymbol{\theta}_{j}$ can be calculated as follows
\begin{equation}
\begin{aligned}
U(\boldsymbol{\theta})=&\cdots e^{-\imath\sigma\theta_{j}}\cdots\\
 =&\cdots R\sigma(\theta_{j})
 \cdots, \\
\partial_{j} U(\boldsymbol{\theta})=&\cdots R\sigma(\theta_j)(-\imath\sigma)
\cdots,
\end{aligned}
\end{equation}
where $\sigma$ is a Pauli matrix, and $\cdots$ represents the gates to the left and right of $R\sigma(\theta_j)$. In case there are no gates, $\cdots$ represents an identity.
\section{Extrema of the cost function}\label{min-cost}

\begin{remark}For the HEA, we were not able to evaluate the cost function derivatives \eqref{distance_derivative} solely by using equation \eqref{heaidentitycondition}, which prompted us to calculate the set of hyperparameters that satisfy \eqref{heaidentitycondition}.
\begin{equation}
    V_{\text{hea}}(\boldsymbol{\theta}_\I)-e^{\imath\alpha(\boldsymbol{\theta}_\I)}\I=0.
    \label{heaidentitycondition}
\end{equation}
\end{remark}

\begin{lemma} The structure of the HEA, depicted in figure \ref{HEAfig}, constrains the set of hyperparameters $\Theta$ that satisfy \eqref{heaidentitycondition} as:
\begin{equation}\label{heavarietyeqs}
\Theta= \left\{\begin{aligned}\boldsymbol{\theta}_{\I}~|~ \text{if}~~{\theta}_\text{Di+1}&=2 \pi w ~ \implies
       {\theta}_\text{Yi}=\pi u,~
    {\theta}_\text{Z1i}+{\theta}_\text{Z2i}=\pi v~
        \\
       \boldsymbol{\theta}_{\I}~|~ \text{if}~~{\theta}_\text{Di+1}&= \pi(2 w-1) ~\implies {\theta}_\text{Yi}=\pi u,~
    {\theta}_\text{Z1i}+{\theta}_\text{Z2i}=\pi\left(\frac{1}{2}+ v\right)~
           \end{aligned}  \right\}
           \begin{aligned}
                (a)\\
                (b)
           \end{aligned}
\end{equation}
where $u,v,w \in \mathbb{Z}$, $i$ is the qubit index, ${\theta}_{\text{Y}i}$ are the $R_Y$ parameters, ${\theta}_{\text{Z1}i}$ and ${\theta}_{\text{Z2}i}$ are the parameters of each pair of $R_Z$ gates in a wire, and  ${\theta}_{\text{D}i}$ are the $CR_Y$ parameters.
\end{lemma}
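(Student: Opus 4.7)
The plan is to derive \eqref{heavarietyeqs} by probing $V_{\text{hea}}(\boldsymbol{\theta}_\I)$ on a small, well-chosen family of computational basis states and reading off the per-parameter constraints imposed by the requirement that $V_{\text{hea}}(\boldsymbol{\theta}_\I)$ act as the same global phase on every basis element.

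First, I would evaluate $V_{\text{hea}}(\boldsymbol{\theta}_\I)|0\rangle^{\otimes n}$. Every $CR_Y$ in the daisy chain has its control initially in $|0\rangle$, so the entangling layer reduces to the identity on this state and only the per-wire $R_Z R_Y R_Z$ blocks act. Since $R_Z|0\rangle \propto |0\rangle$, the condition $V_{\text{hea}}(\boldsymbol{\theta}_\I)|0\rangle^{\otimes n}\propto |0\rangle^{\otimes n}$ collapses on each wire to $R_Y(\theta_{Yi})|0\rangle \propto |0\rangle$, forcing $\theta_{Yi}\in\pi\mathbb{Z}$ (using the convention $R_Y(\theta)=e^{-i\theta Y}$ suggested by the $R_{ZZ}$ definition in the paper). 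This already yields the $R_Y$ clause in both (a) and (b).

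Next, with $R_Y(\theta_{Yi})=(-1)^{u_i}\I$ in place, every single-qubit block is diagonal in the computational basis. I would then evaluate $V_{\text{hea}}(\boldsymbol{\theta}_\I)$ on the single-excitation states $|0\cdots 01_i 0\cdots 0\rangle$. The only activated entangling gate is the $CR_Y$ whose control is qubit $i$; it applies $R_Y(\theta_{D(i+1)})$ to a $|0\rangle$ on the neighbouring wire. Preserving that basis vector forces $R_Y(\theta_{D(i+1)})|0\rangle \propto |0\rangle$, i.e.\ $\theta_{D(i+1)} = \pi W$ with $W \in \mathbb{Z}$, which splits into the $W=2w$ case (a) and the $W=2w-1$ case (b). Comparing the eigenphases of $V_{\text{hea}}(\boldsymbol{\theta}_\I)$ on $|0\rangle^{\otimes n}$ and on $|0\cdots 01_i 0\cdots 0\rangle$, which must coincide with the common scalar $e^{i\alpha(\boldsymbol{\theta}_\I)}$, reduces using $R_Z|0\rangle=e^{-i\theta}|0\rangle$, $R_Z|1\rangle=e^{+i\theta}|1\rangle$, and the factor $(-1)^{W}$ produced by $CR_Y(\pi W)$ on $|1\rangle_{c}|0\rangle_{t}$, to the per-wire ratio $e^{2i(\theta_{Z1i}+\theta_{Z2i})}(-1)^{W}=1$. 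This gives $\theta_{Z1i}+\theta_{Z2i}=\pi v$ when $W$ is even and $\theta_{Z1i}+\theta_{Z2i}=\pi(\tfrac{1}{2}+v)$ when $W$ is odd, completing the necessity of \eqref{heavarietyeqs}.

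The step I expect to be the main obstacle is sufficiency: that the constraints (a)/(b) on every wire actually force $V_{\text{hea}}(\boldsymbol{\theta}_\I) = e^{i\alpha}\I$ on the entire $2^{n}$-dimensional space, rather than only on the $n{+}1$ basis vectors used above. The plan is to exploit the observation that once $\theta_{Yi}$ and $\theta_{D(i+1)}$ lie in $\pi\mathbb{Z}$ every gate in the layer becomes diagonal in the computational basis, so $V_{\text{hea}}(\boldsymbol{\theta}_\I)$ is itself a diagonal matrix whose phase on $|x_1\dots x_n\rangle$ factorises into per-wire contributions; one then checks by induction on the Hamming weight of $x$ that, under (a)/(b), each excitation $x_i=1\to 0$ changes the accumulated phase by a factor that cancels between the single-qubit contribution $e^{2i(\theta_{Z1i}+\theta_{Z2i})}$ and the control-induced factor $(-1)^{W_{i+1}}$. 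The cyclic wrap-around of the daisy chain couples the parities $W_{i+1}$ across wires, so this global consistency across the $n$ cyclic constraints is the step most likely to require careful bookkeeping.
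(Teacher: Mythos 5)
Your argument reaches the same characterization as the paper but by a genuinely different route. The paper factors the layer as $V_{\text{hea}}=D(\boldsymbol{\theta}_{\text{D}})R(\boldsymbol{\theta}_{\text{R}})$, demands $D\simeq R^{\dagger}$, and argues that the daisy chain can equal a tensor product of single-qubit gates only if it has rank one, which via commutation relations among the $CR_Y$'s forces $\theta_{\text{D}i}\in\pi\mathbb{Z}$ before the single-qubit constraints are read off; you instead probe $V_{\text{hea}}(\boldsymbol{\theta}_{\I})$ on the all-zeros and single-excitation basis states, which is more elementary and makes sufficiency transparent, since once the $R_Y$ and $CR_Y$ angles are pinned to $\pi\mathbb{Z}$ every gate is diagonal. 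Two of your intermediate justifications need tightening, though neither is fatal. First, the single-qubit layer acts \emph{before} the daisy chain, so when $D$ fires on the evolved state the controls are not in $\ket{0}$ unless you already know $R\ket{0}^{\otimes n}\propto\ket{0}^{\otimes n}$, which is what you are trying to prove; the correct one-liner is that $D^{\dagger}$ fixes $\ket{0}^{\otimes n}$ (each $CR_Y^{\dagger}$ sees its control in $\ket{0}$ there), so $V\ket{0}^{\otimes n}\propto\ket{0}^{\otimes n}$ forces $R\ket{0}^{\otimes n}\propto\ket{0}^{\otimes n}$ and hence $\theta_{\text{Y}i}\in\pi\mathbb{Z}$. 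Second, on $\ket{0\cdots 01_i0\cdots 0}$ the gate $CR_Y^{i,i+1}$ is not literally the only one that can fire: if it rotates qubit $i+1$ into a superposition, downstream gates of the chain activate as well. The conclusion survives because no later gate alters the computational-basis populations of qubit $i+1$, so the output can coincide with the input basis vector only if $R_Y(\theta_{\text{D}(i+1)})\ket{0}\propto\ket{0}$. Finally, the cyclic bookkeeping you flag as the main obstacle in the sufficiency direction does not actually arise: with all $\theta_{\text{Y}i},\theta_{\text{D}i}\in\pi\mathbb{Z}$ the eigenphase of the (now diagonal) layer on $\ket{x}$ is a product over wires of factors depending only on $x_i$, namely $(-1)^{u_i}\,e^{-\imath(1-2x_i)(\theta_{\text{Z1}i}+\theta_{\text{Z2}i})}\,(-1)^{W_{i+1}x_i}$, so constancy in $x$ decouples into exactly the per-wire conditions (a) and (b) with no global consistency constraint around the ring.
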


\begin{proof}A single layer of the HEA can be written as:
\begin{equation}
    V_\text{hea}(\boldsymbol{\theta})=D(\boldsymbol{\theta}_\text{D})R(\boldsymbol{\theta}_\text{R}),
\end{equation}
where $R(\boldsymbol{\theta}_\text{R})$ are the single qubit gates, and $D(\boldsymbol{\theta}_\text{D})$ is the $CR_Y$ gates daisy-chain. We want $V_\text{hea}(\boldsymbol{\theta})$ to be equal to the identity up to a phase
\begin{equation}
    \begin{aligned}
        D(\boldsymbol{\theta}_\text{D})R(\boldsymbol{\theta}_\text{R})\simeq &\I\\
        D(\boldsymbol{\theta}_\text{D})\simeq & R(\boldsymbol{\theta}_\text{R})^\dagger.\\
    \end{aligned}
\end{equation}
Since $D(\boldsymbol{\theta}_\text{D})$ contains only non-local gates, it can only be expressed as the tensor product of single qubit gates if and only if $\text{rank}(D(\boldsymbol{\theta}_\text{D}))=1$
\begin{equation}
       D(\boldsymbol{\theta}_\text{D})\simeq \bigotimes_{i=1}^n R_i(\boldsymbol{\theta}_{\text{R}i})^\dagger\iff \text{rank}(D(\boldsymbol{\theta}_\text{D}))=1,\label{Drestriction}
\end{equation}
where $R_i(\boldsymbol{\theta}_{\text{R}i})$ are the single qubit gates applied to the $i$th qubit. \eqref{Drestriction} also implies that gates in different qubits commute as
\begin{equation}
    \text{rank}(D(\boldsymbol{\theta}_\text{D}))=1\implies \left[CR_{Y}^{i,j},CR_{Y}^{k,l}\right]=0,
\end{equation}
where $i,j,k,l$ are qubit indices.
$CR_{Y}^{i,j}$ can be written as
\begin{equation}
    CR_{Y}^{i,j}=\I+\ket{1}\bra{1}^i\otimes(R_Y(\theta_j)-\I )^j,
\end{equation}
so the commutation relation is
\begin{equation}
\begin{aligned}
     \left[CR_{Y}^{i,j},CR_{Y}^{k,l}\right]=&\left(\ket{1}\bra{1}^i\otimes(R_Y(\theta_j)-\I )^j\right)\left(\ket{1}\bra{1}^k\otimes(R_Y(\theta_l)-\I )^l\right)\\
    &-\left(\ket{1}\bra{1}^k\otimes(R_Y(\theta_l)-\I )^l\right)\left(\ket{1}\bra{1}^i\otimes(R_Y(\theta_j)-\I )^j\right)=0.
    \label{commutation}
\end{aligned}
\end{equation}
In our case we use a daisy chain structure, so we are interested in $i\neq k,~j\neq l$. 
For equivalence \eqref{commutation} to always be true, the options are:
\begin{itemize}
    \item[(i)] at least one parameter is $2\pi w$, or
\item[(ii)] both parameters are $\theta_{j,l}= \pi(2w-1)$,
\end{itemize}
for $w\in\mathbb{Z}$. Writing $CR_{Y}^{i,j}$ as 
\begin{equation}
    CR_Y(\theta_i)=\frac{1}{2}\big(Z\otimes(\I-R_Y(\theta_i))-\I\otimes(R_Y(\theta_i+I))\big),
    \label{rankIZ}
\end{equation}
we can check that using restrictions (i) and (ii),  $\text{rank}(CR_Y(\pi w))=1$. 
Restriction (i) allows $D(\boldsymbol{\theta}_\text{D})$ to have a parameter not equal to a multiple of $\pi$ and still have commuting gates
\begin{equation}
    D(\boldsymbol{\theta}_\text{D})=A\circ CR_Y^{i,j}(\theta_{\text{D}i})\circ B,
    \label{Dcomutes}
\end{equation}
where $CR_Y^{i,j}(\theta_{\text{D}i})$ is the gate with parameter $\theta_j\neq \pi w$, and  $A$, $B$ repesent the gates to the left and right, and $\text{rank}(A), \text{rank}(B)=1$.
From \eqref{rankIZ} and \eqref{Dcomutes}, $\theta_{\text{D}j}=\pi w$ is required for  $\text{rank}(D(\boldsymbol{\theta}_\text{D}))=1$, which implies
\begin{equation}
    D(\boldsymbol{\theta}_\text{D})\simeq\bigotimes_{i=1}^n R_i(\theta_{\text{R}i})\iff \theta_{\text{D}i}=\pi w.
\end{equation}

For $\theta_{\text{D}i+1}=2\pi w$, $R_i(\boldsymbol{\theta}_{\text{R}i})\simeq \I$ and writing $R_i(\boldsymbol{\theta}_{\text{R}i})$ as its components
\begin{equation}
\begin{aligned}
       R_{Z}({\theta}_{\text{Z2}i})R_{Y}({\theta}_{\text{Y}i})R_{Z}({\theta}_{\text{Z1}i})=&e^{\imath\alpha(\boldsymbol{\theta}_\I)} \I,
       \end{aligned}
\end{equation}
implies the restrictions ${\theta}_{\text{Y}i}=\pi u$, and ${\theta}_{\text{Z1}i}+{\theta}_{\text{Z2}i}=\pi v$, for $u,v\in \mathbb{Z}$.
For $\theta_{\text{D}i+1}=\pi(2 w-1)$ and writing $R_i(\boldsymbol{\theta}_{\text{R}i})$ as its components
\begin{equation}
\begin{aligned}
       ZR_{Z}({\theta}_{\text{Z2}i})R_{Y}({\theta}_{\text{Y}i})R_{Z}({\theta}_{\text{Z1}i})=&e^{\imath\alpha(\boldsymbol{\theta}_\I)} \I,
       \end{aligned}
\end{equation}
implies the restrictions ${\theta}_{\text{Y}i}=\pi u$, and ${\theta}_{\text{Z1}i}+{\theta}_{\text{Z2}i}=\pi\left(\frac{1}{2}+ v\right)$, for $u,v\in \mathbb{Z}$.
 \end{proof}
  \begin{figure}[ht]
     \[   \Qcircuit @C=1.35em @R= 1.35em {
              & \gate{R_Z(\theta_1)} & \gate{R_Y(\theta_2)}  & \gate{R_Z(\theta_3)} & \ctrl{1}   & \qw        & \qw & \gate{R_Y(\theta_{4})}    & \qw & \qw        & \qw & \qw \\
            & \gate{R_Z(\theta_5)} & \gate{R_Y(\theta_6)}  & \gate{R_Z(\theta_7)} & \gate{R_Y(\theta_{8})}   & \ctrl{1}        & \qw & \qw    & \qw & \qw  & \qw & \qw \\
             & \gate{R_Z(\theta_9)} & \gate{R_Y(\theta_{10})}  & \gate{R_Z(\theta_{11})} & \qw   & \gate{R_Y(\theta_{12})}        & \ctrl{1} & \qw    & \qw & \qw        & \qw & \qw \\
             &\vdots & \vdots & \vdots & & \ddots  & & & & & &\\
                 & \gate{R_Z(\theta_{m-3})} & \gate{R_Y(\theta_{m-2})}  & \gate{R_Z(\theta_{m-1})} & \qw   & \qw        & \gate{R_Y(\theta_{m})} & \ctrl{-4}    & \qw & \qw        & \qw & \qw 
             \gategroup{1}{1}{5}{4}{1.7em}{--}
        }\]
        \caption{Single layer of the HEA. Inside the dashed area $R(\boldsymbol{\theta}_\text{R})$, outside $D(\boldsymbol{\theta}_\text{D})$.}
        \label{Heaselected}
    \end{figure}
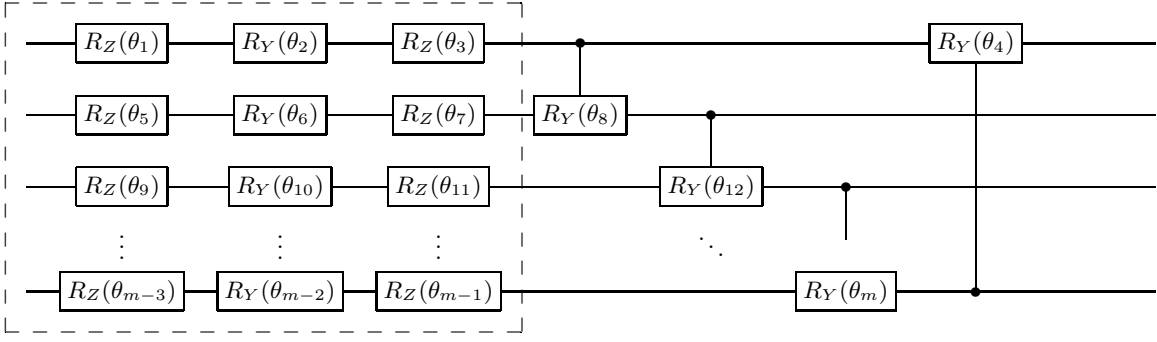

\begin{theorem}[Extrema of the HEA] For a single layer of the HEA, $d$ takes extrema  for the roots of 
\begin{equation}
    V_{\text{hea}}(\boldsymbol{\theta}_\I)-e^{\imath\alpha(\boldsymbol{\theta}_\I)}\I=0.
\end{equation}

\end{theorem}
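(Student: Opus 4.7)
The plan is to evaluate the derivative formulas \eqref{distance_derivative} and \eqref{distance_second_derivative} at the hyperparameters $\boldsymbol{\theta}_\I \in \Theta$ supplied by Lemma 1 and verify that every first partial of $d$ vanishes there, while tracking the sign of the second partials to recover the Hessian description asserted in the main text. The opening observation is that at any $\boldsymbol{\theta}_\I \in \Theta$ one has $V_{\text{hea}}(\boldsymbol{\theta}_\I) = e^{\imath\alpha}\I$, so $\Tr(X_b^{a_1\ldots a_k\dagger} V_{\text{hea}}(\boldsymbol{\theta}_\I)) = e^{\imath\alpha}(2^n-2)$, which is non-zero for $n \geq 2$. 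This keeps the square-root denominators in both derivative formulas bounded away from zero and reduces the question to the sign and vanishing of the numerators.

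For the first derivative the argument is essentially free from unitarity. Differentiating $V_{\text{hea}}^\dagger V_{\text{hea}} = \I$ shows $V_{\text{hea}}^\dagger \partial_j V_{\text{hea}}$ is anti-Hermitian, and at $V_{\text{hea}}(\boldsymbol{\theta}_\I) = e^{\imath\alpha}\I$ this means $\partial_j V_{\text{hea}}|_{\boldsymbol{\theta}_\I} = e^{\imath\alpha} A_j$ with $A_j$ anti-Hermitian. Since $X_b^{a_1\ldots a_k\dagger}$ is Hermitian, $\Tr(X_b^{a_1\ldots a_k\dagger} A_j)$ is purely imaginary, and consequently $\Tr(X_b^{a_1\ldots a_k\dagger}\partial_j V_{\text{hea}})^*\Tr(X_b^{a_1\ldots a_k\dagger} V_{\text{hea}})$ in the numerator of \eqref{distance_derivative} is purely imaginary. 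Its real part therefore vanishes identically for every parameter $\theta_j$, so every $\boldsymbol{\theta}_\I \in \Theta$ is a critical point of $d$, which is precisely the content of the theorem.

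For the Hessian, which is the substantive part, I would compute $\partial_j^2 V_{\text{hea}}|_{\boldsymbol{\theta}_\I}$ class by class. The $R_Y$ and $R_Z$ parameters contribute $\partial_j^2 V_{\text{hea}}|_{\boldsymbol{\theta}_\I} = -\tfrac{1}{4} V_{\text{hea}}|_{\boldsymbol{\theta}_\I}$ via the Pauli involution $\sigma^2 = \I$, and feeding this into \eqref{distance_second_derivative} together with the vanishing first derivative yields a strictly positive $\tfrac{1}{4}(2^n-2)/N$. For the paired $R_Z$ parameters on a single wire, the constraint \eqref{heavarietyeqs} fixes only the combination $\theta_{\text{Z1}i}+\theta_{\text{Z2}i}$, so $V_{\text{hea}}$ depends on these two angles only through that sum; the orthogonal direction is exactly flat, yielding the valley claimed in the main text. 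The main obstacle will be the $CR_Y$ case, whose derivative insertion is the non-local projector--Pauli $\ket{1}\bra{1}\otimes(-\imath Y/2)$ rather than a single-wire Pauli. Under the constraint $\theta_{\text{D}i}=\pi w$, the commutation relation \eqref{commutation} established within Lemma 1 allows this insertion to be moved outside the remaining daisy-chain factors, and a short case analysis distinguishing whether the perturbed wires are the controls, the target, or neither of the Toffoli then fixes the sign of the second derivative and completes the Hessian description.
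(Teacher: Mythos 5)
Your proposal is correct in substance, and for the critical-point half of the statement it takes a genuinely different and cleaner route than the paper. The paper verifies $\partial_j d|_{\boldsymbol{\theta}=\boldsymbol{\theta}_\I}=0$ by computing $\partial_j V_{\text{hea}}|_{\boldsymbol{\theta}_\I}$ explicitly for each gate class under each branch of \eqref{heavarietyeqs} and checking by hand that the relevant trace product has vanishing real part (the ``opposite corners cancel'' style of argument). Your argument --- $V_{\text{hea}}^\dagger\partial_j V_{\text{hea}}$ is anti-Hermitian by unitarity, so at $V_{\text{hea}}=e^{\imath\alpha}\I$ one has $\partial_j V_{\text{hea}}=e^{\imath\alpha}A_j$ with $\Tr(X^{a_1\ldots a_k\dagger}_b A_j)$ purely imaginary because the target is Hermitian, killing the real part in \eqref{distance_derivative} --- handles every parameter at once, needs only the identity condition rather than the explicit variety of Lemma~1, and makes transparent why the phenomenon persists for any Hermitian target and any ansatz that can reach $e^{\imath\alpha}\I$. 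That is a real gain: the paper's Remark states that the derivatives could not be evaluated from the identity condition alone, and your observation shows this is not so at first order.

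For the Hessian half you revert to the paper's case-by-case plan, and there the proposal is a sketch rather than a proof. Two small points: the coefficient in $\partial_j^2V_{\text{hea}}|_{\boldsymbol{\theta}_\I}=-\tfrac14 V_{\text{hea}}|_{\boldsymbol{\theta}_\I}$ presumes half-angle generators, whereas the paper's convention $e^{-\imath\sigma\theta_j}$ gives $-e^{\imath\alpha}\I$ (only the sign matters, so nothing breaks); and the claim that $V_{\text{hea}}$ depends on $\theta_{\text{Z1}i},\theta_{\text{Z2}i}$ only through their sum holds only because $\theta_{\text{Y}i}=\pi u$ collapses the intervening $R_Y$ to $\pm\I$ at the evaluation point --- which is exactly the paper's flat-valley computation, so this is fine once stated. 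The genuinely unfinished piece is the $CR_Y$ Hessian, which you defer to ``a short case analysis.'' The paper carries this out explicitly in \eqref{CRy_second_derivative}: the second-derivative insertion evaluates to $e^{\imath\alpha}\,\boldsymbol{\underline{0}}_2\oplus(-\I_2)$, i.e.\ $-e^{\imath\alpha}$ times the projector $P=\ket{1}\bra{1}\otimes\I$, and $\partial^2_{\text{D}i}d\geq0$ then follows because $\Tr(X^{a_1\ldots a_k\dagger}_b P)\geq 2^{n-1}-2\geq 0$ for any rank-$2^{n-1}$ projector; no distinction between control, target and spectator wires is actually needed. To complete your write-up you would need either that computation or the general observation that $\partial_j^2V_{\text{hea}}|_{\boldsymbol{\theta}_\I}=-e^{\imath\alpha}AH_j^2A^\dagger$ with $AH_j^2A^\dagger\succeq0$ for a generator $H_j$ conjugated by the unitary $A$ to its left, followed by a sign check of $\Tr(X^{a_1\ldots a_k\dagger}_b AH_j^2A^\dagger)$; as written, the sign of the $CR_Y$ second derivative is asserted rather than derived.
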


\begin{proof} 
We prove $V_{\text{hea}}(\boldsymbol{\theta}_\I)$ are extrema by showing 
\begin{equation}
\begin{aligned}
    \partial_j d(V_{\text{hea}}(\boldsymbol{\theta}),k\text{-Toffoli})|_{\boldsymbol{\theta}=\boldsymbol{\theta}_\I}&=0,\\
    \partial^2_j d(V_{\text{hea}}(\boldsymbol{\theta}),k\text{-Toffoli})|_{\boldsymbol{\theta}=\boldsymbol{\theta}_\I}&\geq0.
\end{aligned}
\end{equation}

First, we compute the derivatives for the first set of parameter constrains \eqref{heavarietyeqs}--(a).
The partial derivative of $V_{\text{hea}}(\boldsymbol{\theta})$ with respect to a pair of $R_Z$ parameters is
\begin{align}
 \partial_{\text{Z1}i} V_{\text{hea}}(\boldsymbol{\theta})=\cdots((-\imath Z) R_{Z}(\theta_{\text{Z1}i})R_{Y}(\theta_{\text{Y}i})R_{Z}(\pi v-\theta_{\text{Z1}i})+  R_{Z}(\theta_{\text{Z1}i})R_{Y}(\theta_{\text{Y}i})(\imath Z)R_{Z}(\pi v-\theta_{\text{Z1}i})).
\end{align}
Evaluating for $\boldsymbol{\theta}_\I$
\begin{equation}\label{dtheta1}
\begin{aligned}
 \partial_{\text{Z1}i} V_{\text{hea}}(\boldsymbol{\theta})|_{\boldsymbol{\theta}=\boldsymbol{\theta}_\I} &=\imath Z-\imath Z
 =0. 
\end{aligned}
\end{equation}
From \eqref{dtheta1} and \eqref{distance_derivative},  $\partial_{\text{Z1}i}d(X^{a_1...a_k}_b,V_{\text{hea}}(\boldsymbol{\theta}))|_{\boldsymbol{\theta}=\boldsymbol{\theta}_\I}=0$. 
As for the second derivative
\begin{align}
    \partial^2_{\text{Z1}i} V_{\text{hea}}(\boldsymbol{\theta})|_{\boldsymbol{\theta}=\boldsymbol{\theta}_\I}=&(\imath Z)^2-(\imath Z)^2-(\imath Z)^2+(\imath Z)^2=0,\\
    \partial^2_{\text{Z1}i}d(X^{a_1...a_k}_b,V_{\text{hea}}(\boldsymbol{\theta}))|_{\boldsymbol{\theta}=\boldsymbol{\theta}_\I}=&0,
\end{align}
which indicates this is not just a extrema point but a flat valley.

The derivative of $V_{\text{hea}}(\boldsymbol{\theta})$ with respect to an $R_Y$ parameter is
\begin{equation}\label{devz}
    \begin{aligned}
 \partial_{\text{Y}i} V_{\text{hea}}(\boldsymbol{\theta})= \cdots R_{Z}(\theta_{\text{Z1}i})(-\imath Y)R_ Y(\theta_{\text{Y}i})R_Z(\pi v-\theta_{\text{Z1}i}).
\end{aligned}
\end{equation}
Evaluating for $\boldsymbol{\theta}_\I$
\begin{equation}
    \begin{aligned}
 \partial_{{\text{Y}i}} V_{\text{hea}}(\boldsymbol{\theta})|_{\boldsymbol{\theta}=\boldsymbol{\theta}_\I}&=e^{\imath\alpha(\boldsymbol{\theta}_\I)} R_Z(\theta_{\text{Z1}i})(-\imath Y) R_Z(-\theta_{\text{Z1}i})\\
 &= e^{\imath\alpha(\boldsymbol{\theta}_\I)}\begin{pmatrix}
    0 &  -e^{\imath(-2\theta_{\text{Z1}i})} \\
   e^{\imath(2\theta_{\text{Z1}i})}  &  0 \\
\end{pmatrix}.
\end{aligned}
\label{Rytheta2_derivative}
\end{equation}
From \eqref{Rytheta2_derivative} and \eqref{distance_derivative}, $\partial_{\text{Y}i}d(X^{a_1...a_k}_b,V_{\text{hea}}(\boldsymbol{\theta}))|_{\boldsymbol{\theta}=\boldsymbol{\theta}_\I}=0$. The second derivative is
\begin{equation}
    \begin{aligned}
 \partial^2_{\text{Y}i} V_{\text{hea}}(\boldsymbol{\theta})=& \cdots R_Z(\theta_{\text{Z1}i})(-\imath Y)^2R_Y(\theta_{\text{Y}i})R_Z(\pi v-\theta_{\text{Z1}i})\\
\partial^2_{\text{Y}i} V_{\text{hea}}(\boldsymbol{\theta})|_{\boldsymbol{\theta}=\boldsymbol{\theta}_\I} 
 = &-e^{\imath\alpha(\boldsymbol{\theta}_\I)} \I.
\end{aligned}
\label{Rytheta2secondderiv}
\end{equation}
Substituting \eqref{Rytheta2secondderiv} in \eqref{distance_second_derivative} gives $\partial^2_{\text{Y}i}d(X^{a_1...a_k}_b,V_{\text{hea}}(\boldsymbol{\theta}))|_{\boldsymbol{\theta}=\boldsymbol{\theta}_\I}>0$.

The controlled $R_Y$ can be written as 
\[CR_Y(\theta_{\text{D}i})=(\I\otimes R_Y(\theta_{\text{D}i}/2))\CN (\I\otimes R_Y(-\theta_{\text{D}i}/2))\CN.\]
The partial derivative of $V_{\text{hea}}(\boldsymbol{\theta})$ with respect to a controlled gate is
\begin{equation}\label{devcr}
\begin{aligned}
    \partial_{\text{D}i}V_{\text{hea}}(\boldsymbol{\theta})=&\bigg(\left(\I\otimes{\frac{-\imath Y}{2}} R_Y(\theta_{\text{D}i}/2)\right){\CN}(\I\otimes R_Y(-\theta_{\text{D}i}/2)){\CN}\\
    &+(\I\otimes R_Y(\theta_{\text{D}i}/2))\CN\left(\I\otimes{ \frac{\imath Y}{2}} R_Y(-\theta_{\text{D}i}/2)\right)\CN\bigg)\cdots.
    \end{aligned}
\end{equation}
Evaluating for $\boldsymbol{\theta}_\I$
\begin{equation}
\begin{aligned}
    \partial_{\text{D}i}V_{\text{hea}}(\boldsymbol{\theta})|_{\boldsymbol{\theta}=\boldsymbol{\theta}_\I}= &e^{\imath\alpha(\boldsymbol{\theta}_\I)}\left(\I\otimes{ \frac{-\imath Y}{2}} \right)\CN(\I\otimes\I )\CN\\
    &+e^{\imath\alpha(\boldsymbol{\theta}_\I)}(\I\otimes \I )\CN\left(\I\otimes{\frac{\imath Y}{2}} \right)\CN\\
    = &e^{\imath\alpha(\boldsymbol{\theta}_\I)}\left[\left(\I\otimes{\frac{-\imath Y}{2}} \right)+\CN\left(\I\otimes{\frac{\imath Y}{2}} \right)\CN\right]\\
    =&e^{\imath\alpha(\boldsymbol{\theta}_\I)}\left[{\frac{\imath}{2}}(-(Y\oplus Y)+\CN (Y\oplus Y))\CN)\right]\\
    =& e^{\imath\alpha(\boldsymbol{\theta}_\I)}{\frac{\imath}{2}}(-(Y\oplus Y)+(Y\oplus (-Y)))\\
   = & e^{\imath\alpha(\boldsymbol{\theta}_\I)} \boldsymbol{\underline{0}}_2\oplus(-\imath Y).
\end{aligned}
\label{CRy_derivative}
\end{equation}
From \eqref{CRy_derivative} and \eqref{distance_derivative} $\partial_{\text{D}i}d(X^{a_1...a_k}_b,V_{\text{hea}}(\boldsymbol{\theta}))|_{\boldsymbol{\theta}=\boldsymbol{\theta}_\I}=0$. The second derivative of $CR_Y(\theta)$ is
\begin{equation}
\begin{aligned}
\partial^2_{\text{D}i}V_{\text{hea}}(\boldsymbol{\theta})=&\left(\I\otimes\left({\frac{-\imath Y}{2}}\right)^2 R_Y(\theta_{\text{D}i}/2)\right)\CN(\I\otimes R_Y(-\theta_{\text{D}i}/2))\CN\\
&+\left(\I\otimes{\frac{-\imath Y}{ 2}} R_Y(\theta_{\text{D}i}/2)\right)\CN\left(\I\otimes{\frac{\imath Y}{ 2}} R_Y(-\theta_{\text{D}i}/2)\right)\CN\\
&+(\I\otimes{\frac{-\imath Y}{ 2}} R_Y(\theta_{\text{D}i}/2))\CN\left(\I\otimes{\frac{\imath Y}{ 2}} R_Y(-\theta_{\text{D}i}/2)\right)\CN\\
&+(\I\otimes R_Y(\theta_{\text{D}i}/2))\CN\left(\I\otimes\left({\frac{\imath Y}{2}}\right)^2 R_Y(-\theta_{\text{D}i}/2)\right)\CN,
\end{aligned}
\end{equation}

\begin{equation}
\begin{aligned}
  \partial^2_{\text{D}i}V_{\text{hea}}(\boldsymbol{\theta})|_{\boldsymbol{\theta}=\boldsymbol{\theta}_\I}=&e^{\imath\alpha(\boldsymbol{\theta}_\I)}\Bigg[ -{\frac{1}{4}}\left(\I\otimes\I \right)\CN(\I\otimes\I )\CN
  +\left(\I\otimes{\frac{-\imath Y}{2}} \right)\CN\left(\I\otimes{\frac{\imath Y}{2}} \right)\CN\\
 & +\left(\I\otimes{\frac{-\imath Y}{2}} \right)\CN\left(\I\otimes{\frac{\imath Y}{2}} \right)\CN
  -{\frac{1}{4}}\left(\I\otimes\I \right)\CN(\I\otimes\I )\CN\Bigg]\\
  = & e^{\imath\alpha(\boldsymbol{\theta}_\I)}\left[\frac{1}{4} (\I_2\oplus(-\I_2)) +\frac{1}{4} (\I_2\oplus(-\I_2)) -\frac{1}{4} ~\I_4-\frac{1}{4} ~\I_4 \right]\\
 = & e^{\imath\alpha(\boldsymbol{\theta}_\I)} \boldsymbol{\underline{0}}_2\oplus(-\I_2). 
\end{aligned}
\label{CRy_second_derivative}
\end{equation}
From \eqref{CRy_second_derivative}, \eqref{distance_second_derivative}, $\partial^2_{\text{D}i}d(X^{a_1...a_k}_b,V_{\text{hea}}(\boldsymbol{\theta}))|_{\boldsymbol{\theta}=\boldsymbol{\theta}_\I}>0$.


Secondly, we compute the derivatives for the second set of constraints on the parameters \eqref{heavarietyeqs}--(b).
The derivative of  $V_{\text{hea}}(\boldsymbol{\theta})$ with respect to a pair of $R_Z$ parameters we have
\begin{equation}
\begin{aligned}
\partial_{\text{Z1}i} V_{\text{hea}}(\boldsymbol{\theta})=&\cdots((-\imath Z) R_{Z}(\theta_{\text{Z1}i})R_{Y}(\theta_{\text{Y}i})R_{Z}(\pi(1/2+ v)-\theta_{\text{Z1}i})Z\\
&+ R_{Z}(\theta_{\text{Z1}i})R_{Y}(\theta_{\text{Y}i})(\imath Z)R_{Z}(\pi(1/2+ v)-\theta_{\text{Z1}i}Z)).
 \end{aligned}
\end{equation}
Evaluating for $\boldsymbol{\theta}_\I$
\begin{equation}\label{2setdtheta1}
\begin{aligned}
 \partial_{\text{Z1}i} V_{\text{hea}}(\boldsymbol{\theta})|_{\boldsymbol{\theta}=\boldsymbol{\theta}_\I} &=\imath Z-\imath Z
 =0. 
\end{aligned}
\end{equation}
From \eqref{2setdtheta1} and \eqref{distance_derivative}, $\partial_{\text{Z1}i}d(X^{a_1...a_k}_b,V_{\text{hea}}(\boldsymbol{\theta}))|_{\boldsymbol{\theta}=\boldsymbol{\theta}_\I}=0$.
The second derivative is 
\begin{equation}
\begin{aligned}
     \partial^2_{\text{Z1}i} V_{\text{hea}}(\boldsymbol{\theta})|_{\boldsymbol{\theta}=\boldsymbol{\theta}_\I}=(\imath Z)^2-(\imath Z)^2-(\imath Z)^2+(\imath Z)^2&=0.\\
     \partial^2_{\text{Z1}i}d(X^{a_1...a_k}_b,V_{\text{hea}}(\boldsymbol{\theta}))|_{\boldsymbol{\theta}=\boldsymbol{\theta}_\I}&=0
\end{aligned}
\end{equation}

Following a similar process as in \eqref{devz} and \eqref{devcr} one can calculate the partial derivatives with respect to the $R_Y$ and $CR_Y$ parameters appearing in \eqref{heavarietyeqs}--(b). The gradients with respect to the $R_Y$ parameters

\begin{equation}\label{eqset2}
\begin{aligned}
 \partial_{{\text{Y}i}}  V_{\text{hea}}(\boldsymbol{\theta})|_{\boldsymbol{\theta}=\boldsymbol{\theta}_\I}& =e^{\imath\alpha(\boldsymbol{\theta}_\I)} \begin{pmatrix}
    0 &  -e^{\imath(-2\theta_{\text{Z1}i})} \\
   e^{\imath(2\theta_{\text{Z1}i})}  &  0 \\
\end{pmatrix},    
\\
\partial_{Yi}d(X^{a_1...a_k}_b,V_{\text{hea}}(\boldsymbol{\theta}))|_{\boldsymbol{\theta}=\boldsymbol{\theta}_\I}&=0,    
\\
   \partial^2_{\text{Y}i} V_{\text{hea}}(\boldsymbol{\theta})|_{\boldsymbol{\theta}=\boldsymbol{\theta}_\I} &=-e^{\imath\alpha(\boldsymbol{\theta}_\I)}\I,\\
\partial^2_{Yi}d(X^{a_1...a_k}_b,V_{\text{hea}}(\boldsymbol{\theta}))|_{\boldsymbol{\theta}=\boldsymbol{\theta}_\I}&\geq 0.
    \end{aligned}
\end{equation}
The gradients with respect to the $CR_Y$ parameters
\begin{equation}
\begin{aligned}
 \partial_{\text{D}i}V_{\text{hea}}(\boldsymbol{\theta})|_{\boldsymbol{\theta}=\boldsymbol{\theta}_\I}&=-\imath e^{\imath\alpha(\boldsymbol{\theta}_\I)}( \boldsymbol{\underline{0}}_2\oplus Y),\\
\partial_{Di}d(X^{a_1...a_k}_b,V_{\text{hea}}(\boldsymbol{\theta}))|_{\boldsymbol{\theta}=\boldsymbol{\theta}_\I}&=0,  \\
\partial^2_{\text{D}i}V_{\text{hea}}(\boldsymbol{\theta})|_{\boldsymbol{\theta}=\boldsymbol{\theta}_\I}&=-e^{\imath\alpha(\boldsymbol{\theta}_\I)}(\boldsymbol{\underline{0}}_2\oplus\I_2),\\
\partial^2_{Di}d(X^{a_1...a_k}_b,V_{\text{hea}}(\boldsymbol{\theta}))|_{\boldsymbol{\theta}=\boldsymbol{\theta}_\I}&\geq0.
    \end{aligned}
\end{equation}
\end{proof}


\begin{figure}[ht]
\[   \Qcircuit @C=1.5em @R= 01.5em {
& \gate{R_X(\omega_1)} & \multigate{1}{R_{ZZ}}    & \gate{R_Z(\omega_4)}   & \qw& \qw & \qw & \qw   \\
& \gate{R_X(\omega_2)} & \ghost{R_{ZZ}}    &    \gate{R_Z(\omega_5)} & \gate{R_X(\omega_6)} &\multigate{1}{R_{ZZ}}  &\gate{R_Z(\omega_9)}&\qw\\
 &\qw &\qw & \qw &  \gate{R_X(\omega_7)} &\ghost{R_{ZZ}} &\gate{R_Z(\omega_{10})}&\qw
\gategroup{1}{2}{2}{4}{.75em}{--} \gategroup{2}{5}{3}{7}{.75em}{--}
}\]
    \caption{Our implementation of a single layer of the checkerboard ansatz for 3 qubits, $R_{ZZ}(\omega)=e^{-\imath\omega Z\otimes Z}$.}
    \label{fig:3q-che}
\end{figure}
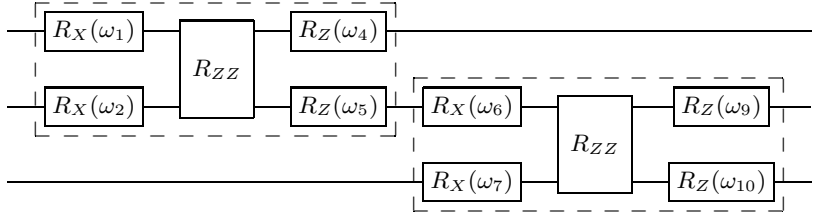

\begin{theorem} [Extrema of the checkerboard ansatz] For a single layer of the checkerboard ansatz, $d$ takes extrema for the roots of  
\begin{equation}
    V_\text{ch}(\boldsymbol{\omega}_\I)-e^{\imath\alpha(\boldsymbol{\omega}_\I)}\I=0.
    \label{cheidentitycondition}
\end{equation}

\end{theorem}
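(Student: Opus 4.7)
\emph{Proof proposal.} The plan is to mirror the three-part structure of Theorem~1 for the HEA: first characterize the identity variety, then verify that the first and second derivatives of $d$ along each parameter have the required signs at every point of that variety.

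First I would prove an analog of Lemma~1 identifying all roots of \eqref{cheidentitycondition}. Since $R_{ZZ}$ is the only entangling gate in $V_\text{ch}$, the same rank-$1$ argument used in the HEA lemma forces each $R_{ZZ}(\omega_{ZZ})$ to be proportional to $\I$, i.e.\ $\omega_{ZZ}\in\pi\mathbb{Z}$. Once the entanglers are trivialized, the residual single-qubit block on each wire has the form $R_Z(\omega_Z)R_X(\omega_X)$ and must itself equal $e^{\imath\alpha_i}\I$; because $R_X$ and $R_Z$ rotate about orthogonal axes, this forces $\omega_X,\omega_Z\in 2\pi\mathbb{Z}$ individually. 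In particular, the identity variety is a discrete lattice, with none of the one-parameter families that $R_Z R_Y R_Z$ generated in HEA and that were responsible for its flat valleys.

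Next, for each of the three parameter types I would compute $\partial_\omega V_\text{ch}|_{\boldsymbol{\omega}_\I}$ by the generator-insertion rule used in Appendix~\ref{gradient analysis}. Each such derivative evaluates to $e^{\imath\alpha(\boldsymbol{\omega}_\I)}$ times a purely imaginary multiple of a Pauli operator on the wire(s) involved, tensored with $\I$ elsewhere (of the form $-\imath X/2$, $-\imath Z/2$, and $-\imath\, Z\otimes Z$ for $R_X$, $R_Z$, and $R_{ZZ}$, respectively). Substituting into \eqref{distance_derivative} and using that the trace of the target against $V_\text{ch}(\boldsymbol{\omega}_\I)$ equals $e^{\imath\alpha(\boldsymbol{\omega}_\I)}(2^n-2)$---a real scalar times a pure phase---the numerator of \eqref{distance_derivative} is purely imaginary, so its real part vanishes. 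This handles the three parameter families simultaneously. For the second derivatives, each Pauli squared equals $\I$, so $\partial_\omega^2 V_\text{ch}|_{\boldsymbol{\omega}_\I}=-c\,e^{\imath\alpha(\boldsymbol{\omega}_\I)}\I$ with $c>0$ (explicitly $c=1/4$ for the single-qubit gates and $c=1$ for $R_{ZZ}$). In \eqref{distance_second_derivative} the two $|\cdot|^2$ contributions cancel exactly when $V$ is a phase times the identity, leaving a single $\partial^2V$ term that evaluates to $+c(2^n-2)/N>0$; this yields a strictly positive second derivative for every parameter.

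The principal obstacle is Stage~1: one must carefully track the global-phase gauge $e^{\imath\alpha(\boldsymbol{\omega})}$ and enumerate the discrete sign choices modulo $2\pi$ that can still satisfy \eqref{cheidentitycondition}, analogous to cases (a) and (b) in \eqref{heavarietyeqs}. Once this combinatorial bookkeeping is complete, the derivative computations in Stages~2 and~3 are largely mechanical adaptations of \eqref{dtheta1}--\eqref{CRy_second_derivative}, with the notable simplification that no flat directions arise.
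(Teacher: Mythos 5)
There is a genuine gap in Stage~1, and it propagates into Stage~2. Your characterization of the roots of \eqref{cheidentitycondition} as a discrete lattice is false for a full checkerboard layer once $n\geq 3$. A layer consists of two overlapping columns of two-qubit blocks, and the condition is only that their \emph{product} equals $e^{\imath\alpha}\I$, not that each gate is individually trivial. For example, on three qubits one can take the first block equal to $\I_2\otimes W$ and the second equal to $e^{\imath\alpha}W^\dagger\otimes\I_2$ with $W$ a nontrivial single-qubit unitary on the shared wire; this yields continuous one-parameter families inside the identity variety (e.g.\ $\omega_{\text{X1},2}=\omega_{\text{X2},2}=t$ with the remaining parameters at their discrete values). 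Even for a single block your conclusion $\omega_X\in 2\pi\mathbb{Z}$ is too restrictive, since $R_X(\pi)=-\I$ is the identity up to phase. Because the variety is not discrete, your Stage~2 simplification --- that every $\partial_\omega V_\text{ch}|_{\boldsymbol{\omega}_\I}$ is a bare Pauli generator times $e^{\imath\alpha}$ --- fails for the gates in the interior of the circuit: there the derivative is a \emph{conjugated} generator such as $e^{\imath\alpha}R_X(\omega_{\text{X1}i})^\dagger(-\imath Z)R_X(\omega_{\text{X1}i})=-\imath e^{\imath\alpha}(\cos(2\omega_{\text{X1}i})Z+\sin(2\omega_{\text{X1}i})Y)$, which acquires real off-diagonal entries. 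One must then separately verify that its Hilbert--Schmidt overlap with the $k$-Toffoli still produces a vanishing real part in \eqref{distance_derivative}; this is exactly the extra structural check the paper performs in \eqref{Z1mattrix} and \eqref{X2der} (cancellation of the opposite-corner entries, respectively purely imaginary overlap), and it is not automatic from your argument. Your side remark that ``no flat directions arise'' is likewise contradicted by these continuous families.

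The paper's route is different and avoids your Stage~1 entirely: it never enumerates the identity variety for the checkerboard (that lemma is needed only for the HEA, where the daisy chain of $CR_Y$ gates makes the derivatives unevaluable from \eqref{cheidentitycondition} alone). Instead it uses the relation $V_\text{ch}(\boldsymbol{\omega}_\I)=e^{\imath\alpha}\I$ only to rewrite the left portion of the circuit as the inverse of the right portion, reducing each first derivative to a conjugated generator whose overlap with the target is shown to be purely imaginary case by case. Your second-derivative argument is essentially sound and matches the paper ($\partial_j^2V_\text{ch}=-V_\text{ch}$ for every parameter, and the two $|\cdot|^2$ terms in \eqref{distance_second_derivative} cancel), but the first-derivative part needs to be redone without assuming the individual gates are trivial at $\boldsymbol{\omega}_\I$.
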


\begin{proof}
We prove $V_\text{ch}(\boldsymbol{\omega}_\I)$ are extrema by showing
\begin{equation}
\begin{aligned}
    \partial_j d(V_{ch}(\boldsymbol{\omega}),X^{a_1...a_k}_b)|_{\boldsymbol{\omega}=\boldsymbol{\omega}_\I}=0,\\
    \partial^2_j d(V_{ch}(\boldsymbol{\omega}),X^{a_1...a_k}_b)|_{\boldsymbol{\omega}=\boldsymbol{\omega}_\I}\geq0.
\end{aligned}
\end{equation}
Following the gate order in figure \ref{fig:3q-che}, the derivative of $V_\text{ch}(\boldsymbol{\omega})$ with respect to the first $R_X$ parameter is
\begin{equation}
\begin{aligned}
    \partial_{\text{X1}i}V_\text{ch}(\boldsymbol{\omega})=&R_Z(\omega_{\text{Z2}i})R_{ZZ}(\omega_{\text{ZZ2}i})R_X(\omega_{\text{X2}i})R_Z(\omega_{\text{Z1}i})R_{ZZ}(\omega_{\text{ZZ1}i})R_X(\omega_{\text{X1}i})(-\imath X)\\
    =&V_\text{ch}(\boldsymbol{\omega})(-\imath X).\\
    \end{aligned}
    \end{equation}
    Evaluating for $\boldsymbol{\omega}_\I$,
    \begin{equation}
\begin{aligned}
     \partial_{\text{X1}i}V_\text{ch}(\boldsymbol{\omega})|_{\boldsymbol{\omega}=\boldsymbol{\omega}_\I}=&e^{\imath\alpha(\boldsymbol{\omega}_\I)}(-\imath X),\\
     \partial_{\text{X1}i}d(X^{a_1...a_k}_b,V_\text{ch}(\boldsymbol{\omega}))|_{\boldsymbol{\omega}=\boldsymbol{\omega}_\I}=&0.
\end{aligned}
\end{equation}

The derivative of $V_\text{ch}(\boldsymbol{\omega})$ with respect to the first $R_Z$ parameter is
\begin{equation}
    \begin{aligned}
     \partial_{\text{Z1}i}V_\text{ch}(\boldsymbol{\omega})=&R_Z(\omega_{\text{Z2}i})R_{ZZ}(\omega_{\text{ZZ2}i})R_X(\omega_{\text{X2}i})R_Z(\omega_{\text{Z1}i})R_{ZZ}(\omega_{\text{ZZ1}i})(-\imath Z)R_X(\omega_{\text{X1}i}).\\
    \end{aligned}
\end{equation}
From \eqref{cheidentitycondition} and evaluating for $\boldsymbol{\omega}_\I$, 
\begin{equation}
    \begin{aligned}
     \partial_{\text{Z1}i}V_\text{ch}(\boldsymbol{\omega})|_{\boldsymbol{\omega}=\boldsymbol{\omega}_\I}=&e^{\imath\alpha(\boldsymbol{\omega}_\I)}R_X(\omega_{\text{X1}i})^\dagger(-\imath Z)R_X(\omega_{\text{X1}i})\\
     =&e^{\imath\alpha(\boldsymbol{\omega}_\I)}\begin{pmatrix}
    -\imath(\cos^2(\omega_{\text{X1}i})-\sin^2(\omega_{\text{X1}i})) &  -2\sin(\omega_{\text{X1}i})\cos(\omega_{\text{X1}i}) \\
   2\sin(\omega_{\text{X1}i})\cos(\omega_{\text{X1}i})  &  \imath(\cos^2(\omega_{\text{X1}i})-\sin^2(\omega_{\text{X1}i})) \\
\end{pmatrix}.
\label{Z1mattrix}
    \end{aligned}
\end{equation}
The matrix elements in opposite corners of \eqref{Z1mattrix} are the same but with opposite signs, which makes them cancel when evaluating the distance derivative
\begin{equation}
    \partial_{\text{Z1}i}d(X^{a_1...a_k}_b,V_\text{ch}(\boldsymbol{\omega}))|_{\boldsymbol{\omega}=\boldsymbol{\omega}_\I}=0.
\end{equation}

The derivative of $V_\text{ch}(\boldsymbol{\omega})$ with respect to the second $R_Z$ parameter is
\begin{equation}
    \begin{aligned}
     \partial_{\text{Z2}i}V_\text{ch}(\boldsymbol{\omega})=&(-\imath Z)R_Z(\omega_{\text{Z2}i})R_{ZZ}(\omega_{\text{ZZ2}i})R_X(\omega_{\text{X2}i})R_Z(\omega_{\text{Z1}i})R_{ZZ}(\omega_{\text{ZZ1}i})R_X(\omega_{\text{X1}i})\\
     =&(-\imath Z)V_\text{ch}(\boldsymbol{\omega}).
    \end{aligned}
\end{equation}
Evaluating for $\boldsymbol{\omega}_\I$,  
\begin{equation}
\begin{aligned}
     \partial_{\text{Z2}i}V_\text{ch}(\boldsymbol{\omega})|_{\boldsymbol{\omega}=\boldsymbol{\omega}_\I}=&e^{\imath\alpha(\boldsymbol{\omega}_\I)}(-\imath Z),\\
     \partial_{\text{Z2}i}d(X^{a_1...a_k}_b,V_\text{ch}(\boldsymbol{\omega}))|_{\boldsymbol{\omega}=\boldsymbol{\omega}_\I}=&0.
\end{aligned}
\end{equation}

The derivative of $V_\text{ch}(\boldsymbol{\omega})$ with respect to the second $R_X$ parameter is
\begin{equation}
    \begin{aligned}
     \partial_{\text{X2}i}V_\text{ch}(\boldsymbol{\omega})=&R_Z(\omega_{\text{Z2}i})R_{ZZ}(\omega_{\text{ZZ2}i})(-\imath X)R_X(\omega_{\text{X2}i})R_Z(\omega_{\text{Z1}i})R_{ZZ}(\omega_{\text{ZZ1}i})R_X(\omega_{\text{X1}i}).\\
    \end{aligned}
\end{equation}
From \eqref{cheidentitycondition} and evaluating for $\boldsymbol{\omega}_\I$, 
\begin{equation}
    \begin{aligned}
     \partial_{\text{X2}i}V_\text{ch}(\boldsymbol{\omega})|_{\boldsymbol{\omega}=\boldsymbol{\omega}_\I}=&
     e^{\imath\alpha(\boldsymbol{\omega}_\I)}R_Z(\omega_{\text{Z2}i})R_{ZZ}(\omega_{\text{ZZ2}i})(-\imath X)R_{ZZ}(\omega_{\text{ZZ2}i})^\dagger R_Z(\omega_{\text{Z2}i})^\dagger.\\
\end{aligned}
\end{equation}
$R_Z(\omega_{\text{Z2}i})R_{ZZ}(\omega_{\text{ZZ2}i})$ is diagonal with elements of the form $e^{\imath \beta_m}$. where $\beta_m\in \mathbb{Z}$, $m\in\{1,2,3,4\}$, which makes 
\begin{equation}
    \begin{aligned}
     \partial_{\text{X2}i}V_\text{ch}(\boldsymbol{\omega})|_{\boldsymbol{\omega}=\boldsymbol{\omega}_\I}=&-\imath e^{\imath\alpha(\boldsymbol{\omega}_\I)}
     \begin{pmatrix}
    0 & e^{\imath (\beta_1-\beta_2)}  & 0 & 0\\
    e^{-\imath (\beta_1-\beta_2)} & 0 & 0 & 0\\
    0 & 0 & 0 & e^{\imath (\beta_3-\beta_4)} \\
    0 & 0 & e^{-\imath (\beta_3-\beta_4)}  & 0
\end{pmatrix}.
     \end{aligned}
     \label{X2der}
     \end{equation}
When evaluating the distance derivative, the Hilbert-Schmidt product between \eqref{X2der} and $X^{a_1...a_k}_b$ will be a purely complex number, since \eqref{distance_derivative} takes only the real part    
\begin{equation}
    \partial_{\text{X2}i}d(X^{a_1...a_k}_b,V_\text{ch}(\boldsymbol{\omega}))|_{\boldsymbol{\omega}=\boldsymbol{\omega}_\I}=0.
\end{equation}

The derivative with respect to the second $R_{ZZ}$ parameter is
\begin{equation}
    \begin{aligned}
     \partial_{\text{ZZ2}i}V_\text{ch}(\boldsymbol{\omega})=&(-\imath Z\otimes Z)V_\text{ch}(\boldsymbol{\omega}).
    \end{aligned}
\end{equation}
Evaluating for $\boldsymbol{\omega}_\I$  
\begin{equation}
\begin{aligned}
     \partial_{\text{ZZ2}i}V_\text{ch}(\boldsymbol{\omega})|_{\boldsymbol{\omega}=\boldsymbol{\omega}_\I}=&e^{\imath\alpha(\boldsymbol{\omega}_\I)}(-\imath Z\otimes Z),\\
     \partial_{\text{ZZ2}i}d(X^{a_1...a_k}_b,V_\text{ch}(\boldsymbol{\omega}))|_{\boldsymbol{\omega}=\boldsymbol{\omega}_\I}=&0.
\end{aligned}
\end{equation}

The derivative of $V_\text{ch}(\boldsymbol{\omega})$ with respect to the first $R_{ZZ}$ parameter is
\begin{equation}
    \begin{aligned}
     \partial_{\text{ZZ1}i}V_\text{ch}(\boldsymbol{\omega})=&R_Z(\omega_{\text{Z2}i})R_{ZZ}(\omega_{\text{ZZ2}i})R_X(\omega_{\text{X2}i})R_Z(\omega_{\text{Z1}i})R_{ZZ}(\omega_{\text{ZZ1}i})(-\imath Z\otimes Z)R_X(\omega_{\text{X1}i}).\\
    \end{aligned}
\end{equation}
From \eqref{cheidentitycondition} and evaluating for $\boldsymbol{\omega}_\I$, 
\begin{equation}
    \begin{aligned}
     \partial_{\text{Z1}i}V_\text{ch}(\boldsymbol{\omega})|_{\boldsymbol{\omega}=\boldsymbol{\omega}_\I}=&e^{\imath\alpha(\boldsymbol{\omega}_\I)}(R_X(\omega_{\text{X1}i})\otimes R_X(\omega_{\text{X1}i+1}))^\dagger(-\imath Z\otimes Z)(R_X(\omega_{\text{X1}i})\otimes R_X(\omega_{\text{X1}i+1}))\\
     =&e^{\imath\alpha(\boldsymbol{\omega}_\I)}\begin{pmatrix}
    -\imath(\cos^2(\omega_{\text{X1}i})-\sin^2(\omega_{\text{X1}i})) &  -2\sin(\omega_{\text{X1}i})\cos(\omega_{\text{X1}i}) \\
   2\sin(\omega_{\text{X1}i})\cos(\omega_{\text{X1}i})  &  \imath(\cos^2(\omega_{\text{X1}i})-\sin^2(\omega_{\text{X1}i})) \\
\end{pmatrix}\\
&\otimes \begin{pmatrix}
    -\imath(\cos^2(\omega_{\text{X1}i+1})-\sin^2(\omega_{\text{X1}i+1})) &  -2\sin(\omega_{\text{X1}i+1})\cos(\omega_{\text{X1}i+1}) \\
   2\sin(\omega_{\text{X1}i+1})\cos(\omega_{\text{X1}i+1})  &  \imath(\cos^2(\omega_{\text{X1}i+1})-\sin^2(\omega_{\text{X1}i+1})) \\
   \end{pmatrix}.\\
\label{zz1matrix}
    \end{aligned}
\end{equation}
Similar to \eqref{Z1mattrix}, the elements in opposite corners of the matrices in \eqref{zz1matrix} are the same but with opposite signs, which makes them cancel when evaluating the distance derivative
\begin{equation}
    \partial_{\text{Z1}i}d(X^{a_1...a_k}_b,V_\text{ch}(\boldsymbol{\omega}))|_{\boldsymbol{\omega}=\boldsymbol{\omega}_\I}=0.
\end{equation}

The second derivative of $V_\text{ch}(\boldsymbol{\omega})$ with respect to any parameter is
\begin{equation}
    \begin{aligned}
     \partial_{i}^2V_\text{ch}(\boldsymbol{\omega})
     =&-V_\text{ch}(\boldsymbol{\omega}),\\
     \partial_{i}^2 d(X^{a_1...a_k}_b,V_\text{ch}(\boldsymbol{\omega}))|_{\boldsymbol{\omega}=\boldsymbol{\omega}_\I}>&0.
    \end{aligned}
\end{equation}

\end{proof}

\end{document}